\newcommand{\bv}{\mathbf{v}}    
\newcommand{\bx}{\mathbf{x}}    
\newcommand{\bH}{\mathbf{H}}    
\newcommand{\bn}{\mathbf{n}}    
\newcommand{\bN}{\mathbf{N}}    
\newcommand{\bC}{\mathbf{C}}    
\newcommand{\bI}{\mathbf{I}}    
\newcommand{\bV}{\mathbf{V}}    
\newcommand{\bA}{\mathbf{A}}    
\newcommand{\define}{\triangleq}    
\newcommand{\mbbC}{\mathbb{C}}	
\newcommand{\mccn}{\mathcal{CN}}	
\newcommand{\btt}{\bsym{\theta}}
\newcommand{\bzro}{\mathbf{0}}
\newcommand{\ben}{\begin{enumerate}} 	  	
\newcommand{\een}{\end{enumerate}} 			
\newcommand{\beq}{\begin{equation}} 	  	
\newcommand{\eeq}{\end{equation}} 			
\newcommand{\bes}{\begin{equation*}}
\newcommand{\ees}{\end{equation*}}
\newcommand{\bea}{\begin{eqnarray}}		
\newcommand{\eea}{\end{eqnarray}} 		
\newcommand{\beas}{\begin{eqnarray*}}
\newcommand{\eeas}{\end{eqnarray*}}
\newcommand{\ba}{\begin{array}}
\newcommand{\ea}{\end{array}}
\newcommand{\sbea}{\nopagebreak[3]\samepage\begin{eqnarray}}
\newcommand{\seea}{\end{eqnarray}\pagebreak[0]}
\newcommand{\sbeas}{\nopagebreak[3]\samepage\begin{eqnarray*}}
\newcommand{\seeas}{\end{eqnarray*}\pagebreak[0]}
\newcommand{\er}[1]{{\rm(\ref{#1})}}
\newcommand{\bit}{\begin{itemize}}
\newcommand{\eit}{\end{itemize}}
\newcommand{\bsym}{\boldsymbol}
\newcommand{\ti}{\textit}
\newcommand{\nn}{\nonumber}
\DeclareMathOperator{\Trace}{Tr}
\newtheorem{theorem}{Theorem}
\newenvironment{proof}[1][Proof]{\begin{trivlist}
\item[\hskip \labelsep {\bfseries #1}]}{\end{trivlist}}
\newcommand{\qed}{\nobreak \ifvmode \relax \else
      \ifdim\lastskip<1.5em \hskip-\lastskip
      \hskip1.5em plus0em minus0.5em \fi \nobreak
      \vrule height0.75em width0.5em depth0.25em\fi}
\def\iflatex{\iftrue}
\def\ifcomments{\iffalse}
\begin{document}

\title{A Hybrid Approach to Joint Estimation of Channel and Antenna impedance
	\thanks{This material is based
		upon work supported by the National Science Foundation under Grant
		1343309. Any opinions, findings, and conclusions or
		recommendations expressed in this material are those of the
		author(s) and do not necessarily reflect the views of the National
		Science Foundation.}
}
\author{\IEEEauthorblockN{Shaohan~Wu}
	\IEEEauthorblockA{\textit{Department of Electrical and Computer Engineering} \\
		\textit{North Carolina State University}\\
		Raleigh, NC 27695-7911 \\
		{swu10}@ncsu.edu}
	\and
	\IEEEauthorblockN{Brian~L.~Hughes}
	\IEEEauthorblockA{\textit{Department of Electrical and Computer Engineering} \\
		\textit{North Carolina State University}\\
		Raleigh, NC 27695-7911 \\
		{blhughes}@ncsu.edu}
}
\maketitle
\begin{abstract}

This paper considers a hybrid approach to joint estimation of channel information and antenna impedance, for single-input, single-output channels. 
Based on observation of training sequences via synchronously switched load at the receiver, we derive joint maximum a posteriori  and maximum-likelihood (MAP/ML) estimators for channel and impedance over multiple packets. We investigate important properties of these estimators, e.g., bias and efficiency. We also explore the performance of these estimators through numerical examples. 
\end{abstract}

\begin{IEEEkeywords}
	Hybrid Estimation, Channel Estimation, Training Sequences, Antenna Impedance Estimation. 
\end{IEEEkeywords}

\section{Introduction}
Antenna impedance matching at mobile receivers has been shown to significantly impact capacity and diversity of wireless channels \cite{carlo10,carlo12,lau,wall,gans,gans2}. 
This matching becomes challenging as antenna impedance changes with time-varying near-field loading, e.g., human users \cite{Lau, Ali, Thomp}. 
To mitigate this variation, joint channel and antenna impedance estimators have been derived under classical estimation assumptions \cite{Wu,Witt}. 
However, important properties of these estimators, e.g., bias and efficiency, remain unclear.

In this paper, we develop a hybrid estimation framework of channel information and antenna impedance, for single-input, single-output channels. 
In particular, channel information is modeled as complex Gaussian while antenna impedance is deterministic. 
Based on observation of training sequences via synchronously switched load at the receiver, we derive the joint maximum a  posteriori  and maximum-likelihood (MAP/ML) estimators for channel and impedance over multiple packets. Then, bias, consistency, and efficiency of these estimators are investigated. We also explore the performance of these estimators through numerical examples. 

The rest of the paper is organized as follows. We present the system model in Sec.~II, and derive the joint MAP/ML estimators for the channel and antenna impedance in Sec.~III. 
Important properties, e.g., bias and efficiency, are  studied in Sec.~IV. We then explore the performance of these estimators through numerical examples in Sec.~V, and  conclude  in Sec.~VI.

\section{System Model}
Consider a narrowband communications link with one transmit antenna and one receive antenna. We adopt the receiver model illustrated in Fig.~\ref{fig_SystemModel}, which has been widely used to model scenarios, where amplifier noise dominates at the receiver \cite{gans2,lau,wall}. 
\iflatex
\begin{figure}[t!]
	\begin{center}
		\includegraphics[width=0.45\textwidth, keepaspectratio=true]{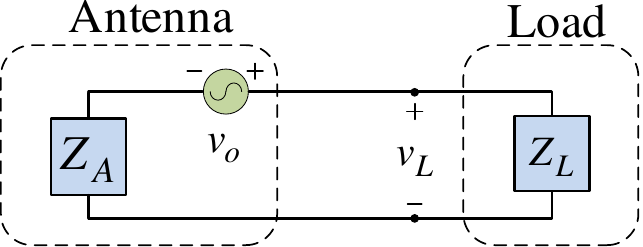}
	\end{center}
	\vspace{-5pt}
	\caption{Circuit model of a single-antenna receiver}
	\vspace{-10pt}
	\label{fig_SystemModel}
\end{figure}
\fi

In Fig.~\ref{fig_SystemModel}, the antenna is modeled by its Thevenin equivalent
\beq
v \ = \ Z_A i + v_o \ , \label{antenna}
\eeq
where $v, i \in \mathbb{C}$ are the voltage across, and current into, the antenna terminals. The antenna impedance is 
\beq\label{eq_impedance}
Z_A \ = \ R_A + j X_A \ ,
\eeq
where $R_A$ and $X_A$ are the resistance and reactance, respectively. In \er{antenna}, $v_o \in \mathbb{C}$ is the open-circuit voltage induced by the incident signal field, which can be modeled in a flat-fading environment as \cite{carlo12}
\beq\label{eq_oc_volt}
v_o \ = \ Gx \ ,
\eeq
where $x\in\mbbC$ is the transmitted symbol and $G\in\mbbC$ is the fading  path gain.
Suppose the fading path gain $G$ and antenna impedance $Z_A$ are unknown to the receiver.
We assume the transmitter sends a predetermined training sequence, $x_1 , \ldots , x_T$, and the receiver shifts synchronously through a sequence of known
impedances $Z_{L,1} , \ldots , Z_{L,T}$. If $G$ and $Z_A$ are modeled as fixed over the duration of the training sequence, the observations are given by
\bea\label{observations}
{v}_{L,t} &=& \frac{Z_{L,t} Gx_t}{Z_A + Z_{L,t}} + n_t \ , ~~~\ t = 1, \ldots , T \ ,
\eea
where ${ n}_t \sim\mathcal{CN}({0}, \sigma_n^2)$ are independent and identically distributed (i.i.d.) complex Gaussian noise. In this paper, we consider load impedances that take on only two possible values \cite{Wu},
\begin{eqnarray}\label{impedanceshifts}
	Z_{L,t} \ = \ \left\{ \begin{array}{cc}
		Z_1 , & 1 \leq t \leq K \ , \\
		Z_2 , & K < t \leq T \ .
	\end{array} \right. 
\end{eqnarray}
where $Z_1$ and $Z_2$ are known. Here we assume $Z_L=Z_1$ is the load impedance used to receive the transmitted data, and is matched to our best estimate of $Z_A$; additionally $Z_L=Z_2$ is an impedance variation introduced in order to make $Z_A$ observable.

The effective channel information needed by the communication algorithms is normally defined as \cite[eq.~13]{Wu},
\beq\label{h}
H \ \define \ \frac{Z_1G}{Z_A + Z_1} \ .
\eeq
With this definition and \er{impedanceshifts}, the sequence of observations in \eqref{observations} can be described compactly in vector form,
\beq\label{vL}
\bv_L \ \define  \ \begin{bmatrix}
	\bv_1\\
	\bv_2
\end{bmatrix} \ = \ \begin{bmatrix}
H\bx_1\\FH\bx_2
\end{bmatrix} + \bn \ ,
\eeq
where the noise is i.i.d. complex Gaussian, $\bn \sim\mathcal{CN}({0}, \sigma_n^2\bI_T)$, the two partitions of each training sequence are,
\beq
\bx_1 \ \define \  [x_1, \dots,x_K]^T\ , ~~\bx_2 \ \define \ [x_{K+1}, \dots,x_T]^T \ ,
\eeq
and $F$ is a one-to-one mapping of $Z_A$ provided $Z_1 \neq Z_2$, 
\beq\label{F}
F \ \define \ \frac{1+Z_A/Z_1}{1+Z_A/Z_2} \ .
\eeq
Thus, knowing $F$ is equivalent to knowing $Z_A$. 

Redefining the estimation problem in terms of $H$ and $F$ renders the observation bilinear in the unknowns. In the channel estimation literature, since $G$ usually results from a random superposition of multipaths, $H$ is usually modeled as a Gaussian random variable. On the other hand, $F$  appears more appropriately modeled as deterministic. This  models the actual estimation problem better than the classical estimation model used in previous studies \cite{Wu,Witt}. In particular, the multi-packet estimators in the prequel assumed constant channel over all packets \cite[Sec. V-B]{Wu}. We consider in the next section the general and more important estimation problem, where the channel is time-varying over multiple packets and follows a complex Gaussian prior distribution. 

\ifcomments
With these definitions, we can now precisely state the estimation problem considered in this paper. We would like to estimate the
complex-valued parameters
\beq\label{theta}
\boldsymbol{\theta}\ \define \  \begin{bmatrix}
	H & F
\end{bmatrix}^T \ ,
\eeq
based on the observations \er{vL}. Here, $H$ is a zero-mean Gaussian random variable, $H\sim\mccn(0, \sigma_H^2)$ and $F$ is deterministic. Note the prior probability density function (pdf) of $H$,
\beq\label{pdf_h}
p(H) \ = \ \frac{1}{\pi \sigma_H^2} \exp\left(-\frac{|H|^2}{\sigma_H^2}\right) \ ,
\eeq
does not depend on $F$. For simplicity, we  assume $\sigma_H^2$ is known to the receiver. 

In the next section,,  we investigate optimal estimators when multiple packets are available for joint processing, and prior information for channel exists.
\fi

\section{Joint MAP/ML Estimators}
Consider a sequence of $L$ packets, where $H_i$ denotes the channel during the $i$-th packet transmission. We assume the channel information vector is unknown, jointly-distributed, complex Gaussian, such that 
\beq\label{H}
\bH \ = \ [H_1 , \ldots , H_L]^T \sim {\cal CN} ( {\bf 0}, \bC_\bH ) \ , 
\eeq
where the channel covariance $\bC_\bH$ is known at the receiver. We assume $F$ changes more slowly with time, and is regarded as fixed over the $L$ packets. 
If each packet is formatted as \er{vL}, then
the entire $L$ packets of observations can be written in matrix form, 
\beq\label{VL}
{\bV}_L  \ = \ \begin{bmatrix}
	\cdots & \bv_{L,i} & \cdots 
\end{bmatrix} \ , ~~~ 1\leq i\leq L  \ .
\eeq
where each column is the observation of the $i$-th packet \er{vL},
\beq\label{vLi}
\bv_{L,i} \ \define \ \begin{bmatrix}
	\bv_{1,i}\\
	\bv_{2,i}
\end{bmatrix}\ = \ \begin{bmatrix}
	H_i\bx_1\\FH_i\bx_2
\end{bmatrix} + \bn_i \ .
\eeq
We assume the noise $\bn_i \sim\mathcal{CN}({0}, \sigma_n^2\bI_T)$ is temporally i.i.d.. The aim  is to use
these observations to estimate the unknowns
\beq\label{theta}
{\btt} \ \triangleq \ \begin{bmatrix}
	\bH\\
	F
\end{bmatrix} \ .
\eeq

Now we have precisely defined the estimation parameters and the observation model, it is important to identify an appropriate estimation framework. 
In classical estimation, unknown parameters are modeled as deterministic; in Bayesian estimation, they are modeled as random variables. The problem described in \eqref{theta} is a \ti{hybrid} estimation problem \cite[pg.~329]{vt}, because $\boldsymbol{\theta}$ contains both random and deterministic parameters. Several authors have investigated hybrid estimation problems  \cite{mes09,mes97,roc,tabr,ren}. Estimators and Cram\'er-Rao-type bounds for hybrid estimation were formulated in \cite{roc} in the context of passive source localization. In this section, we apply similar tools to address the estimation problem formulated in \er{theta}.

\subsection{Estimators for General ${\bf C}_\bH$}
Before discussing estimators, it is convenient to introduce sufficient statistics that summarize the information in the observations $\bV_L$ in \eqref{VL} relevant to estimation of $\btt$. In classical estimation, a statistic ${\bf V}=f(\bV_L)$ is sufficient
to estimate $\btt$ based on $\bV_L$ if $p( \bV_L | {\bV}; \btt) = p( {\bV}_L | {\bf V})$. In Bayesian settings, a statistic is sufficient if $p( {\bV}_L, {\boldsymbol \theta} | {\bf V}) = p( {\bV}_L | {\bf V})p(\btt | {\bf V})$. It is well known that
classical sufficiency implies Bayesian sufficiency; it also clearly implies sufficiency for the hybrid estimation problem, which is defined in an analogous way.

A set of sufficient statistics to estimate $\btt$ based on observations in \er{VL} is derived in the Appendix. We  therefore consider the observations to be the following vector in $\mbbC^{2L}$, 
\bea\label{V}
\bV  & \define &\begin{bmatrix}
	\bV_1\\
	\bV_2
\end{bmatrix} \ = \ \begin{bmatrix}
\bH + \bN_1\\
F\bH + \bN_2
\end{bmatrix} \ , 
\eea
where $\bN_1 \sim {\cal CN} ( {\bf 0}, \frac{\sigma_n^2}{S_1}\bI_L)$ and $\bN_2 \sim {\cal CN} ( {\bf 0}, \frac{\sigma_n^2}{S_2}\bI_L)$ are independent noise vectors, and we define
\beq\label{Ss}
S_1 \ \triangleq \ \bx_1^H\bx_1  \ , ~~~ S_2 \ \triangleq \ \bx_2^H\bx_2 \ .
\eeq 
Conditioned on $\bH$, the sufficient statistic $\bV$ is a complex Gaussian random vector, and its pdf is
\bea\label{pdf_V}
p(\bV|\bH;F) \ = \ \frac{\exp\left[-\left(\bV-\boldsymbol{\mu}\right)^H\bC_\bv^{-1}\left(\bV-\boldsymbol{\mu}\right)\right]}{\det(\pi\bC_\bv)} \ ,
\eea
where the mean and covariance are, respectively,
\beq
 \boldsymbol{\mu} \ = \ \begin{bmatrix}
\bH\\ F \bH
\end{bmatrix} \ ,~~ \bC_\bv \ = \ \begin{bmatrix}
\frac{\sigma_n^2}{S_1}\bI_L & \bzro_{L\times L} \\
\bzro_{L\times L}  & \frac{\sigma_n^2}{S_2}\bI_L
\end{bmatrix} \ .
\eeq
Note $\bzro_{L\times L}$ represents the $L\times L$ all zero matrix. 
Also, we assume $F$ is an unknown constant and $\bH$ is a random vector with pdf
\beq\label{pdf_H}
p(\bH ) \ = \ {[\det(\pi \bC_\bH)]^{-1}}\exp\left(-\bH^H\bC_\bH^{-1}\bH \right)\ . 
\eeq
To jointly estimate $\bH$ and $F$,  we consider estimators that
maximize the hybrid log-likelihood function \cite[pg.~329]{vt}\cite{roc},
\bea
	\hat{\btt}(\bV) \ \triangleq \  \arg \max_{\btt \in \mathbb{C}^{L+1}} \, \ln \, p(\bV,\bH;F) \ , \label{jointMAPMLmp}
	\eea
which are presented in the following theorem.

\begin{theorem}[Joint MAP/ML Estimators]\label{th_estimators}
	Given the  sufficient statistic $\bV$  in \eqref{V},
	the joint MAP/ML  estimators for $\bH$ and $F$ are, respectively,
	\beq\label{map}
	\hat{\bf H}_{MAP} \ \triangleq \ \bA\left(\hat{F}_{ML}\right) \left(\bV_1 +\alpha\hat{F}_{ML}^*\bV_2 \right) 	\ ,
	\eeq
	and $\hat{F}_{ML}$, where $\hat{F}_{ML}$ is a zero of the rational function
	\bea\label{ml}
	g(F) &\triangleq& \left( \bV_1 +\alpha F^*\bV_2 \right)^H \bA^H(F)\bA(F)\nn\\
	&&\cdot\left( \bV_2-F\bV_1 +\frac{\sigma_n^2}{S_1} {\bC}_{\bH}^{-1} \bV_2\right) \ , 
	\eea
	and we define
	\beq
	\alpha\ \define \ \frac{S_2}{S_1} \ , ~~~\bA(F) \ \triangleq \ 	\left[ \left( 1+\alpha |F|^2 \right) \bI + \frac{\sigma_n^2}{S_1}\bC_\bH^{-1} \right]^{-1} \ .
	\eeq \hfill 
\end{theorem}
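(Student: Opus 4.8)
The plan is to solve the joint maximization in \eqref{jointMAPMLmp} by profiling: first optimize over $\bH$ for fixed $F$, then optimize the resulting function over $F$. Substituting the conditional density \eqref{pdf_V} and the prior \eqref{pdf_H} into $\ln p(\bV,\bH;F)=\ln p(\bV\mid\bH;F)+\ln p(\bH)$, and noting that neither $\bC_\bv$ nor $\det(\pi\bC_\bH)$ depends on $(\bH,F)$, one sees that $\ln p(\bV,\bH;F)$ equals a constant minus $\sigma_n^{-2}J(\bH,F)$, where
\bes
J(\bH,F) \ \define \ S_1\|\bV_1-\bH\|^2 + S_2\|\bV_2-F\bH\|^2 + \sigma_n^2\,\bH^H\bC_\bH^{-1}\bH \ .
\ees
Hence $\hat\btt(\bV)$ is the minimizer of $J$ over $\mbbC^{L+1}$.

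\textbf{Optimization over $\bH$.} For fixed $F$, $J(\cdot,F)$ is a strictly convex quadratic in $\bH$, since its (complex) Hessian $(S_1+S_2|F|^2)\bI+\sigma_n^2\bC_\bH^{-1}$ is positive definite. Its unique minimizer is obtained by setting the Wirtinger gradient $\partial J/\partial\bH^*$ to zero, which gives the normal equations $\big[(S_1+S_2|F|^2)\bI+\sigma_n^2\bC_\bH^{-1}\big]\bH = S_1\bV_1+S_2F^*\bV_2$. Dividing by $S_1$ and recalling $\alpha=S_2/S_1$ yields $\hat\bH(F)=\bA(F)(\bV_1+\alpha F^*\bV_2)$, which is \eqref{map} with $F$ replaced by its optimal value.

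\textbf{Optimization over $F$.} Substituting $\hat\bH(F)$ back defines the profile objective $\phi(F)\define J(\hat\bH(F),F)$, minimized at $\hat F_{ML}$. Since $\hat\bH(F)$ is a stationary point of $J(\cdot,F)$ and $J$ is real-valued, the envelope theorem gives $\partial\phi/\partial F^* = \big(\partial J/\partial F^*\big)\big|_{\bH=\hat\bH(F)} = -S_2\,\hat\bH(F)^H\big(\bV_2-F\hat\bH(F)\big)$, so any optimal $F$ obeys $\hat\bH(F)^H(\bV_2-F\hat\bH(F))=0$. To bring this to the stated form I would use the identity $\bV_2-F\hat\bH(F)=\bA(F)\big(\bV_2-F\bV_1+\tfrac{\sigma_n^2}{S_1}\bC_\bH^{-1}\bV_2\big)$, obtained by writing $\bV_2=\bA(F)\bA(F)^{-1}\bV_2$ and inserting $\bA(F)^{-1}=(1+\alpha|F|^2)\bI+\tfrac{\sigma_n^2}{S_1}\bC_\bH^{-1}$; together with $\hat\bH(F)^H=(\bV_1+\alpha F^*\bV_2)^H\bA^H(F)$ (note $\bA(F)$ is Hermitian) this turns the stationarity condition into exactly $g(F)=0$ with $g$ as in \eqref{ml}. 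Finally, the entries of $\bA(F)$ are rational in $|F|^2$ and the remaining factors are affine in $F$ (or $F^*$), so $g$ is a rational function, as claimed.

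\textbf{Expected main obstacle.} The convex minimization over $\bH$ and the matrix identity above are routine. The delicate point is the optimization over $F$: Wirtinger calculus must be handled with care because $\hat\bH(F)$ depends on $F$ through both $F^*$ and $|F|^2$, so the envelope-theorem reduction relies on the simultaneous vanishing of $\partial J/\partial\bH$ and $\partial J/\partial\bH^*$ at $\bH=\hat\bH(F)$; one must then recognize the rearrangement that produces the symmetric $\bA^H(F)\bA(F)$ factor in \eqref{ml}. It should also be noted that this argument only shows $\hat F_{ML}$ is \emph{a} zero of $g$, a necessary first-order condition consistent with the profiled problem being non-convex; selecting the global optimum among the zeros is outside the scope of the statement.
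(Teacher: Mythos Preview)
Your proposal is correct and follows essentially the same route as the paper: both set the Wirtinger gradients of the hybrid log-likelihood to zero, solve the $\bH$-equation to obtain \eqref{map}, and substitute into the $F$-equation to obtain \eqref{ml}. Your write-up is in fact more complete than the paper's, since you spell out the matrix identity $\bV_2-F\hat\bH(F)=\bA(F)\bigl(\bV_2-F\bV_1+\tfrac{\sigma_n^2}{S_1}\bC_\bH^{-1}\bV_2\bigr)$ that the paper leaves implicit, and you correctly flag that the envelope reduction requires both $\partial J/\partial\bH$ and $\partial J/\partial\bH^*$ to vanish at $\hat\bH(F)$.
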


\begin{proof} From \er{pdf_V} and \er{pdf_H}, we  write the (hybrid) log-likelihood function as
	\bea\label{likelihood}
	&&\mathcal{L}(\btt) \  \triangleq \  \ln \, p(\bV,\bH;F)   =   \ln \, [p(\bV|\bH;F) \, p(\bH) ]\\
	&=&-\frac{S_1}{\sigma_n^2}|\bV_1-\bH|^2-\frac{S_2}{\sigma_n^2}|\bV_2-F \bH|^2 -\bH^H\bC_\bH^{-1}\bH + C  \ ,\nn
	\eea
	where $C$ is a constant, and $S_1$ and $S_2$ are defined in \er{Ss}. If we denote the real and imaginary parts of the parameter vector by $\btt = \btt_r + j \btt_i$, the (complex) gradient is given by \cite[Sec.~15.6]{kay}
	\beq\label{gradient}
	\frac{\partial \mathcal{L}(\btt)}{\partial \btt^*} \ \define \ \frac{1}{2} \left[\frac{\mathcal{L}(\btt)}{\partial \btt_r} + j\frac{\mathcal{L}(\btt)}{\partial \btt_i}\right] \ .
	\eeq
	In order for $\btt$ in \eqref{theta} to maximize the log-likelihood $\mathcal{L}(\btt)$, it is necessary that the gradient $\frac{\partial \mathcal{L}(\btt)}{\partial \btt^*}$ vanishes, i.e.,
	\beq\label{projection}
		\begin{bmatrix}
			\frac{S_1}{\sigma_n^2}(\bV_1-\bH)+\frac{S_2}{\sigma_n^2}(\bV_2-F \bH)F^*- {\bf C}_{\bH}^{-1} \bH\\
			\frac{S_2}{\sigma_n^2}\bH^H(\bV_2-F \bH)
		\end{bmatrix}
		\ = \ {\bf 0}   \ .  
	\eeq
Solving the first equation for $\bH$, we obtain \er{map}. 
To find $F$,  substitute \er{map} into the second equation, which yields \er{ml}. $\hfill\diamond$
\end{proof}

Since $g(F)$ is a ratio of polynomials, it may have multiple zeros, so  Theorem \ref{th_estimators} does not necessarily specify unique estimators. Given multiple zeros $F_1, ..., F_m$, however, we can identify the MAP/ML solution: For each $F_j$, we
can, in principle, calculate a corresponding estimate of $\bH_j$ from \er{map}. The MAP/ML solution will be the pair $\hat{\btt_j}=[\bH_j^T,F_j]^T$  that maximizes $\mathcal{L}(\btt)$. 

\ifcomments
We conjecture the ML zero is always the one that maximizes ${\rm Re}[ \bV_2^H \bV_1 F_i ]$, and will test this conjecture in the numerical results.

In principle, Theorem 2 requires us to extract and compare all of the zeros of $g(F)$. However, the form of equations \er{mpmap} and \er{projection} suggests a simpler, alternative approach to approximate the MAP/ML estimators. Note
the second equation in \er{projection} can be rewritten as
\beq
F= {\bf H}^H \bV_2/ {\bf H}^H\bH \ . \label{projection2}
\eeq
Thus, we could attempt to solve the equations iteratively by setting $F_0=0$
and solving for $\bH_0$ using \er{mpmap}. This is equivalent to estimating $\bH$ using only $\bV_1$. Using $\bH_0$, we can then form
improved estimate of $F$, say $F_1$, from \er{projection2}. We can then iterate between these two equations to improve the estimates.
This approach will be also explored in the numerical results.
\fi

\subsection{Special Cases of $\sigma_n^{-2}\bC_\bH$}
We now consider three extreme cases of $\sigma_n^{-2} \bC_\bH$ in which Theorem \ref{th_estimators} yields explicit, closed-form 
estimators for $\bH$ and $F$. First suppose $\bH$ \eqref{H} is an i.i.d. sequence, so  $\bC_\bH= \sigma_H^2 \bI$. Thus, $g(F)=0$ reduces to, 
\bea\label{iidCH}
\left( \bV_1 +\alpha F^*\bV_2 \right)^H  \left( \bV_2- cF\bV_1 \right)	\ = \ 0 \ ,
\eea
where 
\beq\label{c}
c \ \define \  {S_1\sigma_H^2}/({S_1\sigma_H^2+\sigma_n^2})\ . 
\eeq
Expanding the product and defining $P_{ij} \triangleq (1/L)\bV_i^H \bV_j$, we obtain
\bea\label{Fquadratic}
P_{12} + (\alpha P_{22}- c P_{11} )F - \alpha c  P_{21} F^2 \ = \ 0 \ .
	\eea
The ML estimate of $F$ is given by one of two roots,
\beq\label{Froots}
F = \frac{\alpha P_{22}-cP_{11} \pm \sqrt{ (\alpha P_{22}-cP_{11} )^2 + 4\alpha c |P_{21}|^2 }}{2c\alpha P_{21}} \ .
\eeq
We conjecture the positive root above is always the joint ML estimate; however, as noted earlier, we can determine which root is the MAP/ML solution by identifying the root that maximizes $\mathcal{L}(\btt)$.

The second extreme case we consider is an arbitrary non-singular ${\bf C}_{\bH}$ in the low noise limit, $\sigma_n^2 \rightarrow 0$.
In \er{ml}, applying the approximation $\sigma_n^2 {\bf C}_{\bH}^{-1} \approx {\bf 0}$,  $g(F)=0$ leads to
\bea\label{highsnriid}
\left( \bV_1 +\alpha F^*\bV_2 \right)^H  \left( \bV_2-F\bV_1 \right)	\ = \ 0 \ .
	\eea
Comparing this to \er{iidCH}, we see this equation is identical to the case of i.i.d. $\bH$ for $c=1$. It follows immediately that the ML estimate of $F$ is given by one of the following two roots,
\beq\label{Froots2}
F = \frac{\alpha P_{22}-P_{11} \pm \sqrt{ (\alpha P_{22}-P_{11} )^2 + 4\alpha  |P_{21}|^2 }}{2\alpha P_{21}} \ .
\eeq
This result suggests that the ML estimator for arbitrary non-singular ${\bf C}_{\bH}$ is asymptotically the same as the i.i.d. ML estimator in \er{Froots} in the low-noise or high SNR limit.

Finally, the last extreme case we consider
is the single-packet case, i.e., $L=1$, where $\bH=H\sim\mccn(0,\sigma_H^2)$ is a special case of \eqref{pdf_H}.  Under these assumptions, the sufficient statistic in \er{V} becomes $\bV=[V_1, V_2]^T$, and Theorem \ref{th_estimators} leads to the single-packet joint MAP/ML solution,
	\beq\label{estimators_sp}
	\hat{\btt}_{SP} \ = \ \begin{bmatrix}
		\hat{H}_{MAP}& \hat{F}_{ML}
	\end{bmatrix}^T \ = \ \begin{bmatrix}
	c V_1&\frac{V_2}{cV_1}
\end{bmatrix}^T \ , 
	\eeq
where $c$ is defined in \eqref{c}.  
Note two sets of solutions exist for \er{projection}, i.e.,  $\hat{\btt}_{SP}$ as in \eqref{estimators_sp} and $\hat{\btt_2} = [0, -V_1^*/(\alpha V_2^*)]^T$. Substituting each back into the single-packet log-likelihood, it can be shown that
\beq
\mathcal{L}(\hat{\btt}_{SP})  =   -\frac{S_1}{\sigma_n^2}|V_1|^2 ( 1-c ) \geq 
\mathcal{L}(\hat{\btt_2})  =   -\frac{S_1}{\sigma_n^2}|V_1|^2-\frac{S_2}{\sigma_n^2} |V_2|^2  \ , \nn
\eeq
for all $V_1, V_2, 0 < c < 1$. So \eqref{estimators_sp} is the global maximum of $\mathcal{L}(\btt)$, and thus the single-packet joint MAP/ML estimators. 

The extremely slow fading case is mathematically identical to the single-packet case, where $\bH=H{\bf 1}$, and $\bf 1$ is the all one vector. Its joint MAP/ML solution follows directly from \er{estimators_sp}, by regarding all $L$ packets as one large packet of size $LT$.

\section{Properties of the Joint MAP/ML Estimators}
The performance of estimators is often measured by low-order central moments, such as bias and mean-squared error (MSE). In this section, we explore the behavior of these moments for the joint MAP/ML estimators in Theorem \ref{th_estimators}, as well as the consistency of these estimators.

\subsection{Single-Packet Estimators}
First consider the single-packet estimators in \eqref{estimators_sp}. Clearly $\hat{H}_{MAP}$ is complex Gaussian, and has finite MSE. 
But $\hat{F}_{ML}$ in \er{estimators_sp} is a ratio of two joint complex Gaussian random variables with zero mean.  The mean of general complex Gaussian ratios has been derived in closed-form\cite[eq.~3]{cgr}. Based on this result, we show the joint ML estimator in $\hat{F}_{ML}$ is unbiased, i.e., $E[\hat{F}_{ML}] = F$ for all $F\in \mbbC$.  However, its MSE,  $E[|\hat{F}_{ML}-F|^2]$, is unbounded. Consequently, mean absolute error (MAE) is used instead of MSE in the simulations when the single-packet $\hat{F}_{ML}$ \er{estimators_sp} is plotted (e.g., see Fig. \ref{fig_F_MAE_corrH}). 

\subsection{Consistency}
Except for the single-packet (or slow-fading) case \er{estimators_sp}, it appears difficult to evaluate the mean of $\hat{F}_{ML}$ for $L > 1$, although numerical examples suggest it may be biased (see Fig. \ref{fig_Fc_bias}). To gain insight into its bias, we study the consistency of $\hat{F}_{ML}$ for i.i.d. $\bH$ when $L$ becomes large. By definition, an estimator $\hat{F}$ is consistent if, given any $\epsilon > 0$ \cite[pg. 200]{kay}, it converge in probability to the true parameter, 
$\lim\limits_{L\rightarrow \infty} \Pr(|\hat{F}-F| > \epsilon) =  0$. 
As $L \rightarrow \infty$, the coefficients in \er{Fquadratic} converge in probability,
\beas
\lim_{L \rightarrow \infty}P_{11} & = & \sigma_H^2 + \sigma_n^2/S_1 \ , ~~ 
\lim_{L \rightarrow \infty} P_{21} \ =  \ \sigma_H^2 F^* \ ,\nn\\
\lim_{L \rightarrow \infty} P_{22} & = & |F|^2 \sigma_H^2 + \sigma_n^2/S_2 \ ,
\eeas
so the roots \er{Froots} converge to
\bea\label{largeLroots}
\left[\alpha|F|^2  - cd  \pm \sqrt{ (\alpha|F|^2  -cd )^2 + 4\alpha c |F|^2 }\right]/{2\alpha c F^*} \ , 
\eea
where $d \triangleq  1 - (\sigma_n^2/S_1 \sigma_H^2)^2$. 
Since neither root converges to $F$, it follows the joint ML estimator  $\hat{F}_{ML}$
is inconsistent\footnote{The joint ML estimator in \er{Froots} should not be confused with  the true ML in classical estimation, where the latter is provably consistent yet the former is not; see Kay and the references therein \cite[pg. 211]{kay}.} in $L$. However, if the second term in the square root in \eqref{largeLroots} is multiplied by $d$ (to complete the square), the positive root becomes  $F/c$. 
This suggests a consistent estimator
\beq\label{FC}
\hat{F}_{C} \ \triangleq \ \frac{\alpha P_{22}-cP_{11} + \sqrt{ (\alpha P_{22}-cP_{11} )^2 + 4\alpha c d  |P_{21}|^2 }}{2\alpha P_{21}}  \ .
\eeq
For large signal-to-noise ratios ($S_1 \sigma_H^2 \gg \sigma_n^2$), we note $c$ and $d$ are approximately 1, and the positive root in \er{Froots} coincides with this consistent estimator. In the next section, we will compare the performance of $\hat{F}_{ML}$ and $\hat{F}_{C}$ via simulations.

\subsection{Hybrid Cram\'{e}r-Rao Bound}
To evaluate estimator performance, it is useful to consider a fundamental lower bound on
the error covariance
\beq
{\bf C}_{\hat{\boldsymbol \theta}} \ \triangleq \ E_{\bV,\bH;F} \left[ \left(\hat{\boldsymbol \theta}-{\boldsymbol \theta} \right)\left(\hat{\boldsymbol \theta}-{\boldsymbol \theta} \right)^H \right] \ ,
\eeq
where $E_{\bV,\bH;F}$ denotes expectation with respect to the pdf $p( \bV ,\bH;F)$, which is the product of \er{pdf_V} and \er{pdf_H}. A Cram\'{e}r-Rao Bound for
the hybrid estimation problem was presented in \cite[pg.~329]{vt},\cite{roc}. Let $\mathcal{L}(\btt)$ be the hybrid log-likelihood in \er{likelihood}, and define the pseudo-information as
\beq\label{pseudo_info}
E_{\bV,\bH;F}\left[\left(\frac{\partial\mathcal{L}(\btt)}{\partial \btt^*}\right)\left(\frac{\partial\mathcal{L}(\btt)}{\partial \btt^*}\right)^T\right] \ .
\eeq
The hybrid CRB (HCRB) states that, if the pseudo-information vanishes, then\footnote{The HCRB is given for real parameters in \cite[pg.~329]{vt}. Here we use the approach described in \cite[Sec.~15.7]{kay} to state this bound in an equivalent form convenient for complex parameters.}
\beq
{\bf C}_{\hat{\boldsymbol \theta}} 
\ \geq \ \boldsymbol{\mathcal{I}}^{-1}  \ ,
\eeq
for any $\hat{\boldsymbol \theta} = [ \hat{\bH}^T, \hat{F} ]^T$ such that $\hat{F}$ is unbiased, where 
$\boldsymbol{\mathcal{I}}$ is the hybrid information matrix,
\beq
\boldsymbol{\mathcal{I}} \ \define \ E_{\bV,\bH;F}\left[\left(\frac{\partial\mathcal{L}(\btt)}{\partial \btt^*}\right)\left(\frac{\partial\mathcal{L}(\btt)}{\partial \btt^*}\right)^H\right]\ . 
\eeq

For the estimation problem \er{theta}, from 
\er{projection} it is easy to verify that the pseudo-information \er{pseudo_info} vanishes, and the HCRB is
\beq\label{hcrb}
{\bf C}_{\hat{\boldsymbol \theta}} \ \geq \begin{bmatrix}
	\left[ \left( \frac{S_1+|F|^2S_2}{\sigma_n^2}\right) \bI + \bC_\bH^{-1}\right]^{-1} & {\bf 0}_{L \times 1}\\
	{\bf 0}_{1 \times L}& \frac{\sigma_n^2}{S_2{\rm Tr}[\bC_\bH]}
\end{bmatrix} \ .
\eeq

Note that uncertainty in $\bH$ affects the estimators differently. For example, doubling all eigenvalues of $\bC_\bH$ makes $p(\bH)$ less informative, and increases the lower bound on $\hat{\bH}$ error covariance, but decreases the error bound on $F$ estimation. This is intuitively reasonable, since a larger $\Trace[\bC_\bH]$ essentially increases the signal-to-noise of the observation of $F$.

\section{Numerical Results}
\begin{figure}[t!]
	\begin{center}
		\includegraphics[width=.5\textwidth, keepaspectratio=true]{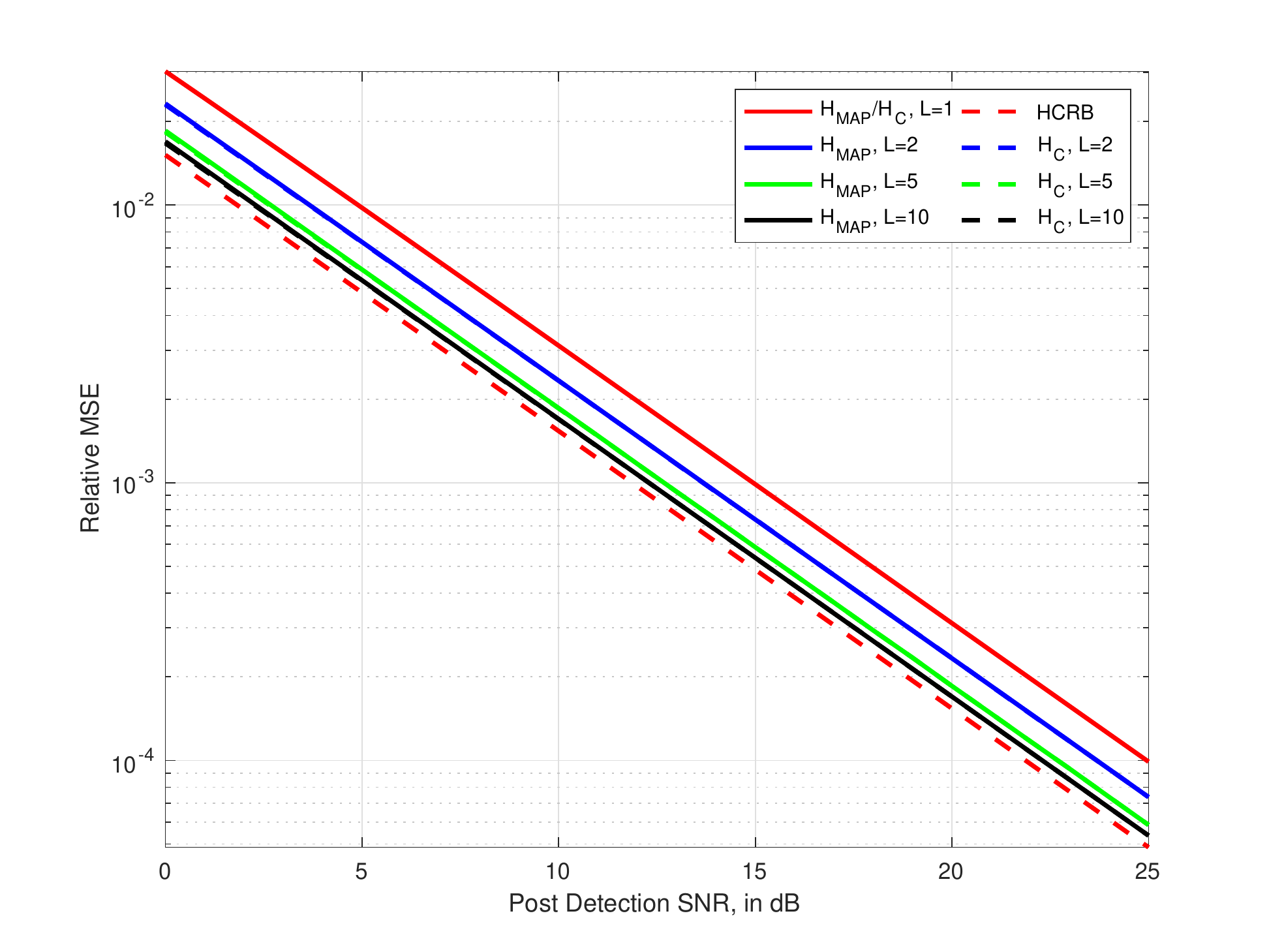}
	\end{center}
	\vspace{-16pt}
	\caption{Relative MSE of $\hat{\bH}_{MAP}$ versus SNR.}
	\vspace{-12pt}
	\label{fig_iidH}
\end{figure}

\begin{figure}[t!]
	\begin{center}
		\includegraphics[width=.5\textwidth, keepaspectratio=true]{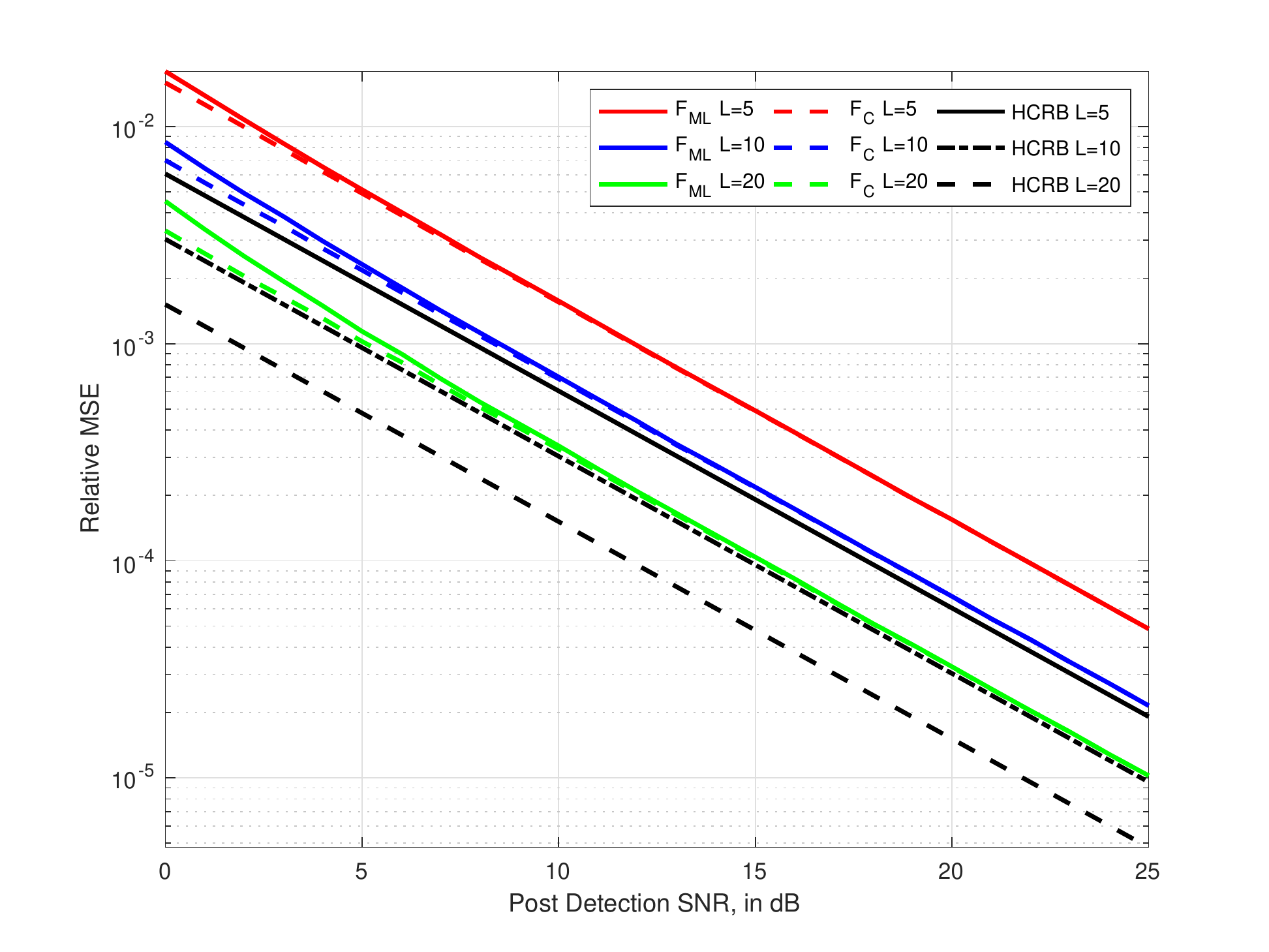}
	\end{center}
	\vspace{-16pt}
	\caption{MSE of $\hat{F}_{ML}$ and $\hat{F}_C$ against SNR for $L>2$.}
	\vspace{-12pt}
	\label{fig_F_CRB}
\end{figure}
In this section, we explore the performance of the estimators in Secs.~III-IV through numerical examples. We take the training sequence in \er{vL} as a unit-magnitude Zadoff-Chu sequence of length $T=64$. The unknown antenna impedance is that of a dipole $Z_A=73+j42.5 \,\Omega$. The load impedance \er{impedanceshifts} is $Z_1=50 \,\Omega$ for the first $K=T/2=32$ symbols of each training sequence, and $Z_2=50+j20 \,\Omega$ for the remaining symbols. From \er{F}, it follows $F=0.9860 + j0.2445$. Suppose the channel $\bH$ is an i.i.d. sequence, so ${\bf C}_\bH = \sigma_H^2 \bI$, and
define the post-detection signal-to-noise ratio (SNR) of a training symbol as
\beq
\rho \ \triangleq \ {\sigma_H^2}/{\sigma_n^2} \ .
\eeq

In Fig.~\ref{fig_iidH}, we plot the relative mean-squared error (MSE) of the MAP estimator \er{map}, which is defined as $E[ \parallel \hat{\bH}_{MAP}- \bH \parallel^2 ]/L \sigma_H^2$, versus $\rho$ for $L=1,2,5,10$ training packets. Here $\hat{F}_{ML}$ is the positive root of \er{Froots}. Also shown is the  HCRB bound on $\hat{\bH}_{MAP}$, which equals the diagonal elements of the upper left matrix block in HCRB \er{hcrb}. This HCRB also lower bounds any channel estimator given $F$. Note the single-packet estimator \er{estimators_sp} is 3 dB away from the HCRB, since it uses only half of the training symbols. When even a few packets are combined, however, the improved estimate of $F$ enables  $\hat{\bH}_{MAP}$ to quickly approach the HCRB to within a fraction of a dB. In particular, for $L=10$, the efficiency is above 90\% for all SNR plotted.

If we assume the MSE of $\hat{F}_{ML}$ is finite for $L >2$, we can compare its MSE performance to  HCRB derived in \er{hcrb}, which, however, is often loose and unachievable\cite{mes09}. 
In Fig.~\ref{fig_F_CRB},
we plot the relative MSE,  $E[ | \hat{F}_{ML}-F|^2]/|F|^2$ versus SNR for $L=5,10,20$. 
At low SNRs, we observe that $\hat{F}_{ML}$ diverges somewhat from the straight line it traces at high SNR. One possible explanation for this behavior is the presence of bias in $\hat{F}_{ML}$ at low SNRs. Some support for this hypothesis was given in Sec.~IV-B, where we showed that $\hat{F}_{ML}$ is inconsistent in $L$. To remedy this situation, we defined a consistent estimator
$\hat{F}_{C}$ in \er{FC},  which is also plotted in Fig.~\ref{fig_F_CRB}. We observe that
$\hat{F}_C$ performs as well or better than $\hat{F}_{ML}$ for all SNR and $L$ in the figure. In particular, $\hat{F}_C$ outperforms $\hat{F}_{ML}$ at low SNR. Back in Fig. \ref{fig_iidH}, we also plotted $\hat{\bH}_C$, which substitutes $\hat{F}_C$ in \er{map} instead of $\hat{F}_{ML}$. However, both channel estimators appear to coincide for all SNR and $L$ plotted.

To better understand the impact of bias, in Fig.~\ref{fig_Fc_bias} we plot the absolute relative bias, defined as
$| E [\hat{F}_{ML} -F ]/{F}|$, versus SNR for $L=5,10$ and two values of $F$, i.e., $F_1 =  0.9860 + j0.2445$ and $F_2 = 1.0644 + j0.5451$. 
Note all four curves for $\hat{F}_{ML}$ seem to coincide, and suggest a relative bias that decreases with SNR but does not appear to depend on $L$ or $F$. For comparison, we also plot the corresponding curves for the consistent estimator, $\hat{F}_C$.
Compared with $\hat{F}_{ML}$, the bias of $\hat{F}_C$ at low and medium SNR appears two orders of magnitude smaller, and does not suggest dependence on $L$. We do not know if $\hat{F}_C$ is unbiased, but it appears less biased than $\hat{F}_{ML}$. 
\begin{figure}[t!]
	\begin{center}
		\includegraphics[width=.5\textwidth, keepaspectratio=true]{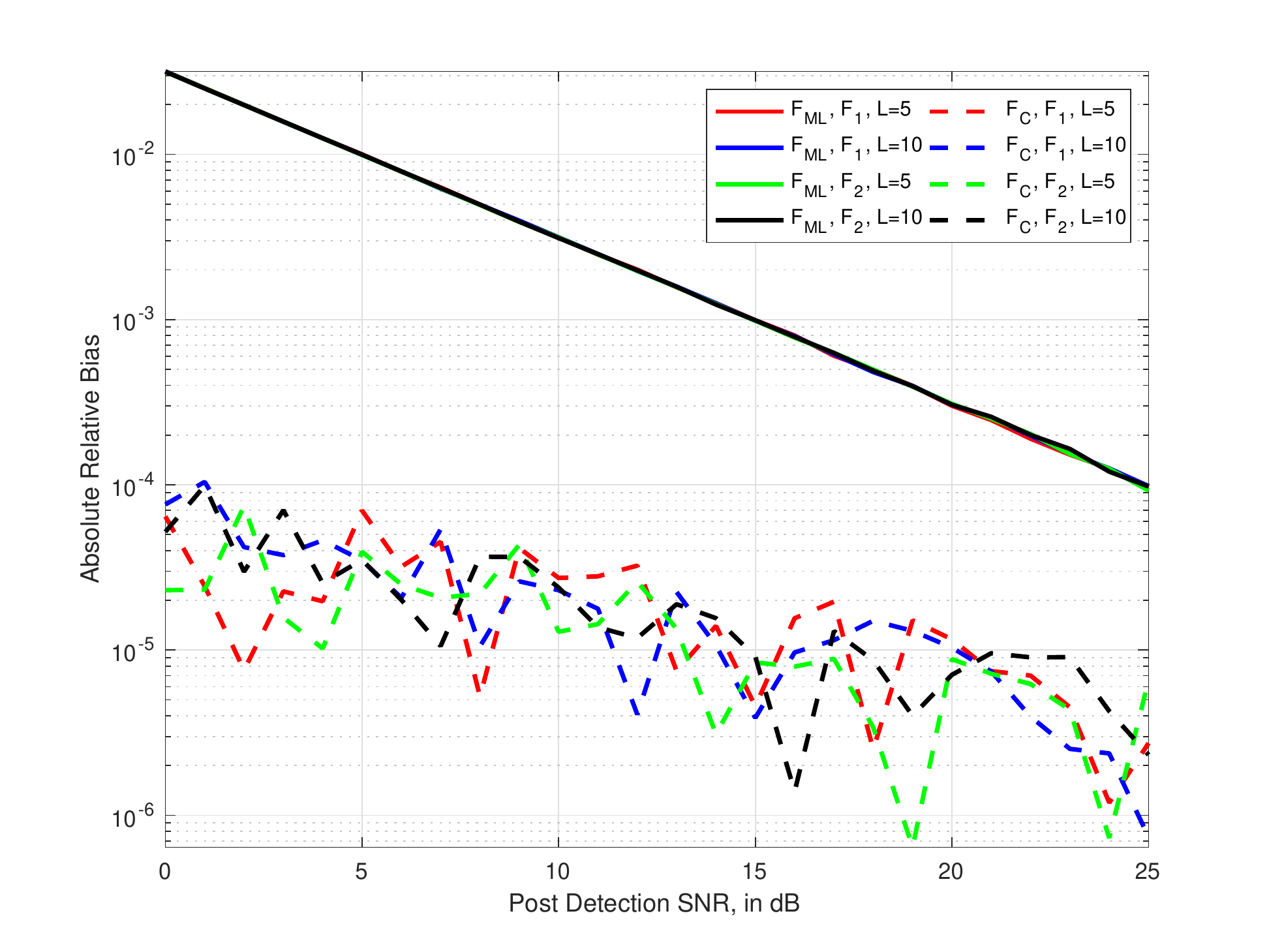}
	\end{center}
	\vspace{-16pt}
	\caption{Relative Bias of $\hat{F}_{ML}$ versus SNR and $L$. }
	\vspace{-12pt}
	\label{fig_Fc_bias}
\end{figure}

Thus far, we have assumed $\bH$ is i.i.d. Now consider an example of extremely slow fading, where $H_1 = \cdots = H_L$. 
In Fig.~\ref{fig_H_corrH}, we plot the relative MSE of $\hat{H}_{MAP}$ versus SNR for several values of $L$. The performance of $\hat{\bH}_{MAP}$ \er{map} for i.i.d. $\bH$ is also included for comparison. Not surprisingly, strong correlation in $\bH$ leads to a 
smaller MSE for $L> 1$, since each channel is now averaged over $L$ observations. However, it is worth noting the {\em opposite} is true of $\hat{F}_{ML}$. In Fig.~\ref{fig_F_MAE_corrH}, we plot the relative MAE of $\hat{F}_{ML}$, i.e., $E[|\hat{F}_{ML}-F|]/|F|$,  under the same conditions as Fig.~\ref{fig_H_corrH}. Here the i.i.d. channel leads to considerably better estimates of $F$, since multiple, independent channel observations average out variations due to $\bH$.
\begin{figure}[t!]
	\begin{center}
		\includegraphics[width=.5\textwidth, keepaspectratio=true]{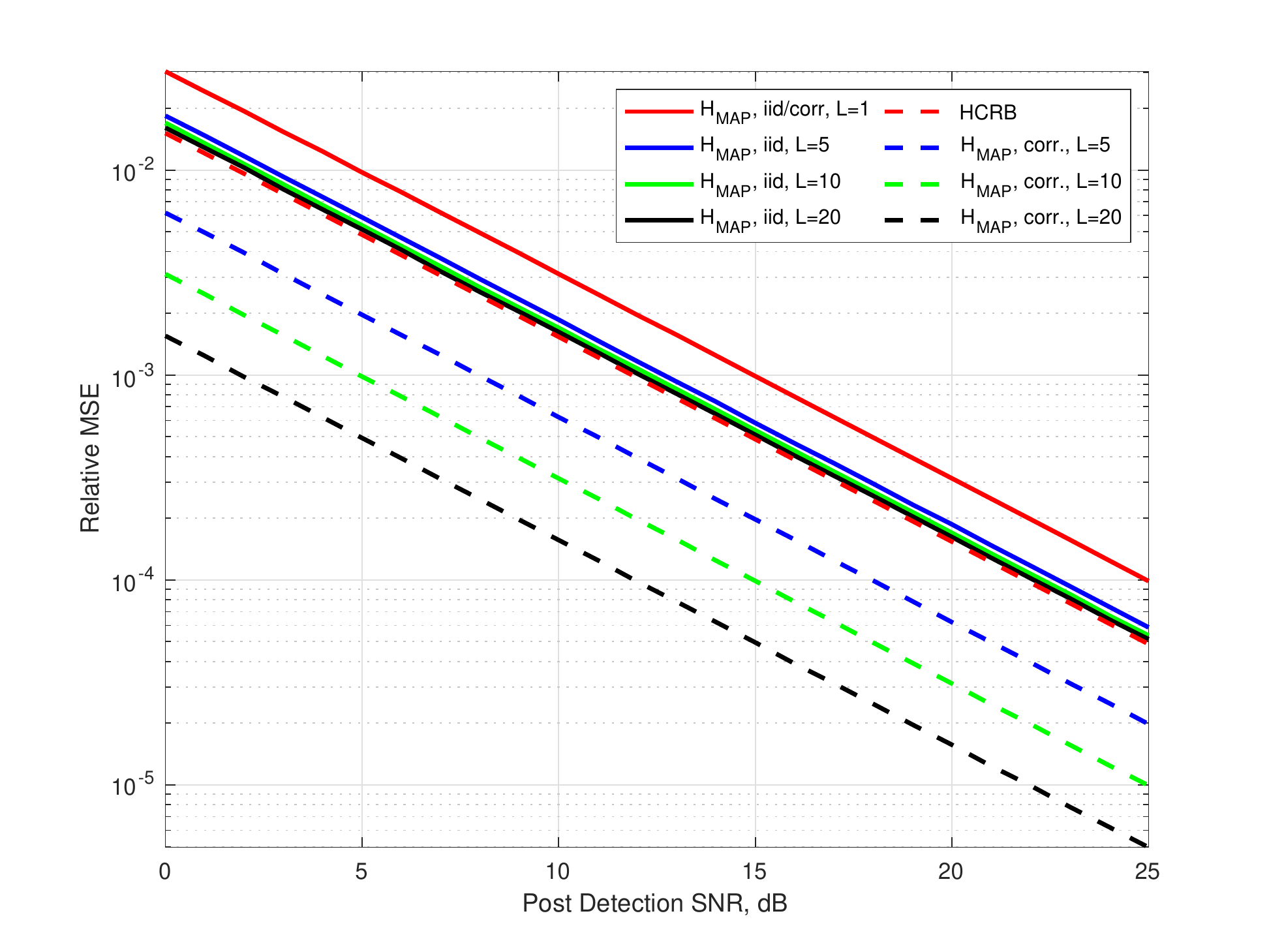}
	\end{center}
	\vspace{-16pt}
	\caption{Relative MSE of $\hat{H}_{MAP}$ versus SNR for a correlated channel.}
	\vspace{-12pt}
	\label{fig_H_corrH}
\end{figure}
\begin{figure}[t!]
	\begin{center}
		\includegraphics[width=.5\textwidth, keepaspectratio=true]{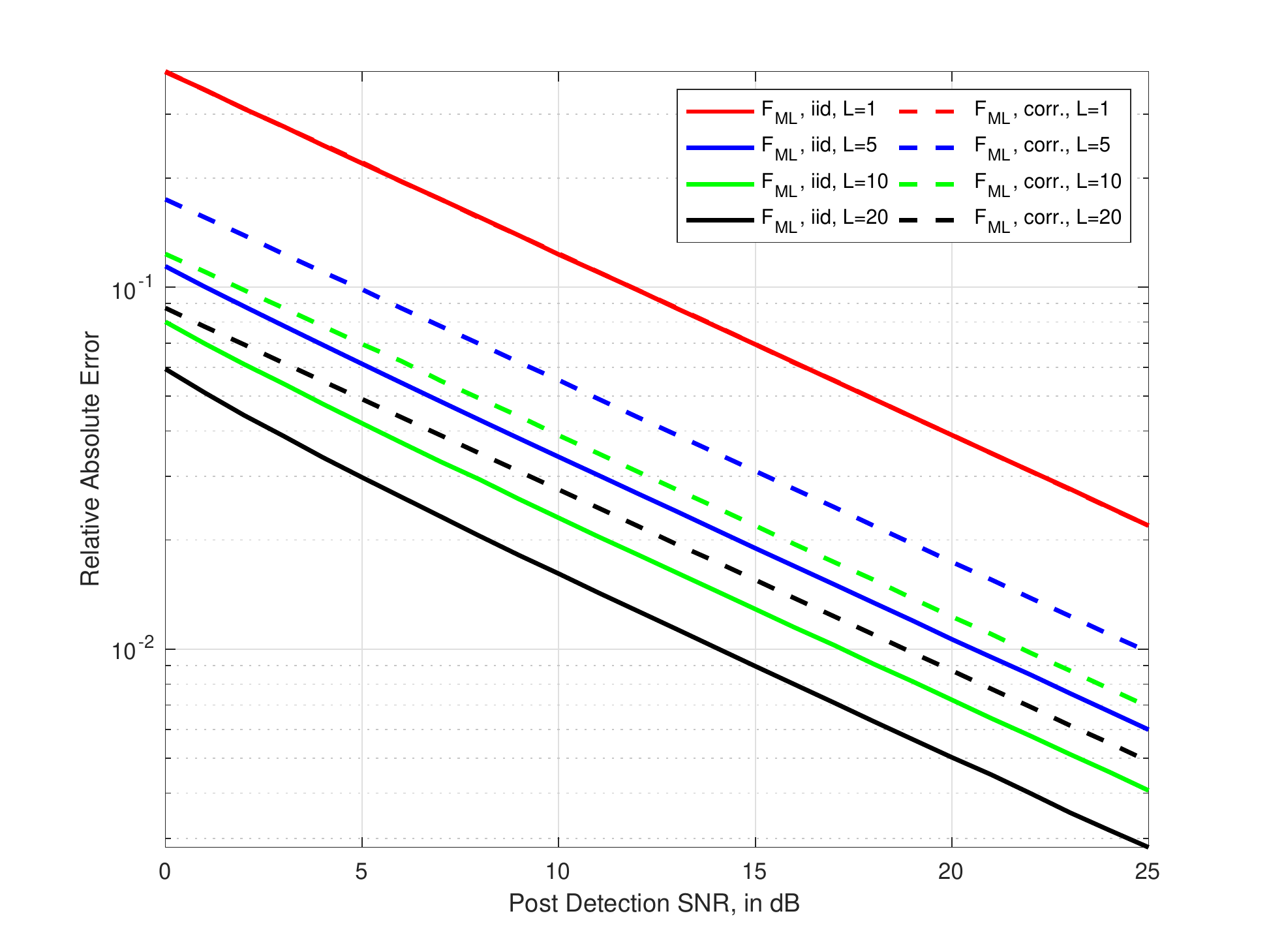}
	\end{center}
	\vspace{-16pt}
	\caption{Relative MAE of $\hat{F}_{ML}$ versus SNR for a correlated channel.}
	\vspace{-12pt}
	\label{fig_F_MAE_corrH}
\end{figure}

\section{Conclusions}
In this paper, we developed a hybrid estimation framework for joint estimation of channel information and antenna impedance. Joint maximum a posteriori and maximum-likelihood (MAP/ML) estimators are derived for multi-packet scenarios in a temporally correlated fading channel. 
Important properties, e.g., efficiency and/or bias, are investigated for the joint MAP/ML estimators, either analytically or through numerical means. We found the joint ML estimators for $F$ are generally biased, except for the single-packet case. After studying its consistency, we found a consistent estimator, which performs as well or better than the joint ML estimator in terms of bias and MSE. Furthermore, the joint MAP channel estimator becomes efficient with a sufficient number of packets. We also explored the impact of channel correlation on estimation accuracy to find that correlation improves channel estimation but worsens impedance estimation. 

\appendix
	Rearrange the conditional PDF $p\left(\bv_{L,i};\btt\right) $ of any arbitrary $\bv_{L,i}$ defined in \eqref{vLi},   $1\leq i \leq L$, 
	\bea
	 \exp \left(-\frac{S_1|V_{1,i}-H|^2}{\sigma_n^2}-\frac{S_2|V_{2,i}-FH|^2}{\sigma_n^2}\right)g(\bv_{L,i}) \nn\ ,
	\eea
	where $g(\bv_{L,i})$ only depends on $\bv_{L,i}$ (and known training sequences $\bx_1$ and $\bx_2$), but not $\btt$,
	\bea
	g(\bv_{L,i}) \ \define \ {(\pi \sigma_n^2)^{-T} }
	\exp\left[\frac{1}{\sigma_n^2}(V_{1,i}+V_{2,i}-{|\bv_{L,i}|^2})\right] \ ,\nn
	\eea
	and we define
	\beq
	V_{1,i} \ \define \ {\bx_1^H\bv_{1,i}}/{S_1} \ , ~~~ V_{2,i} \ \define \ {\bx_2^H\bv_{2,i}}/{S_2} \ .
	\eeq
	By  Neyman-Fisher factorization Theorem \cite[pg. 117]{kay}, $V_{1,i}$ and $V_{2,i}$ is a set of sufficient statistic. To get \er{V}, combine all $L$ sets of sufficient statistics in vector form, i.e., $\bV_1=[\cdots~V_{1,i}~ \cdots]^T$, and $\bV_2=[\cdots~V_{2,i}~ \cdots]^T$. 
	$\hfill\diamond$

\bibliographystyle{unsrt}

\end{document}